\def\B{\{0,1\}}
\newcommand{\NN}{\mathbb{N}}
\newcommand{\diam}{\mathrm{diam}}
\newcommand{\ov}[1]{\overline{#1}}
\begin{document}

\title{Asynchronous simulation of Boolean networks by monotone Boolean networks}

\author{Tarek Melliti\inst{1} \and Damien Regnault\inst{1} \and Adrien Richard\inst{2}\thanks{Corresponding author} \and Sylvain Sen\'e\inst{3}}

\institute{%
Universit\'e d'\'Evry Val-d'Essonne, CNRS, IBISC EA 4526, 91000 \'Evry, France\\
\email{\{tarek.melliti,damien.regnault\}@ibisc.univ-evry.fr}
\and 
Universit\'e de Nice Sophia Antipolis, CNRS, I3S UMR 7271, 06900 Sophia-Antipolis, France\\
\email{richard@unice.fr}
\and 
Aix-Marseille Universit\'e, CNRS, LIF UMR 7279, 13288 Marseille, France\\
\email{sylvain.sene@lif.univ-mrs.fr}
}

\maketitle

\begin{abstract} 
We prove that the fully asynchronous dynamics of a Boolean network $f:\{0,1\}^n\to\{0,1\}^n$ without negative loop can be simulated, in a very specific way, by a monotone Boolean network with $2n$ components. We then use this result to prove that, for every even $n$, there exists a monotone Boolean network $f:\{0,1\}^n\to\{0,1\}^n$, an initial configuration $x$ and a fixed point $y$ of $f$ such that: (i) $y$ can be reached from $x$ with a fully asynchronous updating strategy, and (ii) all such strategies contains at least $2^{\frac{n}{2}}$ updates. This contrasts with the following known property: if $f:\{0,1\}^n\to\{0,1\}^n$ is monotone, then, for every initial configuration $x$, there exists a fixed point $y$ such that $y$ can be reached from $x$ with a fully asynchronous strategy that contains at most $n$ updates.  
\end{abstract}

\begin{keywords} 
Boolean networks, monotone networks, asynchronous updates.
\end{keywords}

\section{Introduction}

A {\em Boolean network} with $n$ components is a discrete dynamical system usually defined by a global transition function
\[
f:\B^n\to\B^n,\qquad x=(x_1,\dots,x_n)\mapsto f(x)=(f_1(x),\dots,f_n(x)).
\]
Boolean networks have many applications. In particular, since the seminal papers of McCulloch and Pitts \cite{MP43}, Hopfield \cite{H82}, Kauffman \cite{K69,K93} and Thomas \cite{T73,TA90}, they are omnipresent in the modeling of neural and gene networks (see \cite{B08,N15} for reviews). They are also essential tools in Information Theory, for the network coding problem \cite{ANLY00,GRF14}. 

\smallskip
The structure of a Boolean network $f$ is usually represented via its {\bf interaction graph}, which is the signed digraph $G(f)$ defined as follows: the vertex set is $[n]:=\{1,\dots,n\}$ and, for all $i,j\in [n]$, there exists a positive (resp. negative) arc from $j$ to $i$ is there exists $x\in\B^n$ such that 
\[
f_i(x_1,\dots,x_{j-1},1,x_{j+1},\dots,x_n)-f_i(x_1,\dots,x_{j-1},0,x_{j+1},\dots,x_n)
\]
is positive (resp. negative). Note that $G(f)$ may have both a positive and a negative arc from one vertex to another. Note also that $G(f)$ may have {\em loops}, that is, arcs from a vertex to itself. The sign of a cycle of $G(f)$ is, as usual, the product of the signs of its arcs (cycles are always directed and without ``repeated'' vertices).

\smallskip
From a dynamical point of view, there are several ways to derive a dynamics from $f$, depending on the chosen {\em updating strategy}. With the so-called {\em synchronous} or {\em parallel strategy}, each component is updated at each step: if $x^t$ is the configuration of the system at time $t$, then $f(x^t)$ is the configuration of the system at time $t+1$. Hence, the dynamics is just given by the successive iterations of $f$. On the opposite way, with the so-called {\em (fully) asynchronous strategy}, exactly one component is updated at each time. This strategy is very often used in practice, in particular in the context of gene networks \cite{TA90}. More formally, given an infinite sequence $i_0i_1i_2\dots$ of indices taken in $[n]$, the dynamics of $f$ resulting from an initial configuration $x^0$ and the asynchronous strategy $i_0i_1i_2\dots$ is given by the following recurrence: for all $t\in\NN$ and $i\in [n]$, $x^{t+1}_i=f_i(x^t)$ if $i=i_t$ and $x^{t+1}_i=x^t$ otherwise. 

\smallskip
All the possible asynchronous dynamics can be represented in a compact way by the so-called {\bf asynchronous graph $\Gamma(f)$}, defined as follows: the vertex set is $\B^n$ and, for all $x,y\in\B^n$, there is an arc from $x$ to $y$, called {\em transition}, if there exists $i\in [n]$ such that $f_i(x)=y_i\neq x_i$ and $y_j=x_j$ for all $j\neq i$. Note that $f$ and $\Gamma(f)$ share the same information. The {\em distance} between two configurations $x$ and $y$ in $\Gamma(f)$, denoted $d_{\Gamma(f)}(x,y)$, is the minimal length of a path of $\Gamma(f)$ from $x$ to $y$, with the convention that the distance is $\infty$ if no such paths exist. Note that $d_{\Gamma(f)}(x,y)$ is at least the Hamming distance $d_H(x,y)$ between $x$ and $y$. A path from $x$ to $y$ in $\Gamma(f)$ is then called  a {\bf geodesic} if its length is exactly $d_H(x,y)$. In other words, a geodesic is a path along which each component is updated at most one. The {\bf diameter} of $\Gamma(f)$ is 
\[
\diam(\Gamma(f)):=\max\{d_{\Gamma(f)}(x,y):x,y\in\B^n,d_{\Gamma(f)}(x,y)<\infty\}. 
\]

\smallskip
In many contexts, as in molecular biology, the first reliable information are represented under the form of an interaction graph, while the actual dynamics are very difficult to observe \cite{N15,TK01}. A natural question is then the following: {\em What can be said about $\Gamma(f)$ according to $G(f)$ only?} 

\smallskip
Robert proved the following partial answer \cite{R86,R95}.

\begin{theorem}\label{thm:robert}
If $G(f)$ is acyclic then $f$ has a unique fixed point $y$. Furthermore, $\Gamma(f)$ is acyclic and, for every configuration $x$, $\Gamma(f)$ has a geodesic from $x$ to~$y$.
\end{theorem}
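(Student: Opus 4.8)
The plan is to exploit a topological ordering of the acyclic digraph $G(f)$. After relabeling the $n$ components I would assume that every arc $j\to i$ of $G(f)$ satisfies $j<i$; equivalently, each $f_i$ depends only on the variables $x_1,\dots,x_{i-1}$, so in particular $f_1$ is a constant. This relabeling costs nothing, because fixed points, the asynchronous graph $\Gamma(f)$, Hamming distances and geodesics are all covariant under a common permutation of the coordinates. The unique fixed point then comes from a forward induction on $i$: the coordinate $y_1=f_1$ is forced, and once $y_1,\dots,y_{i-1}$ are determined, $y_i=f_i(y_1,\dots,y_{i-1})$ is forced as well; conversely the configuration $y$ so defined is indeed a fixed point, which gives existence and uniqueness.

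For the acyclicity of $\Gamma(f)$ I would use induction on $n$. Let $c_1$ be the constant value of $f_1$ and set $H=\{x\in\B^n:x_1=c_1\}$. The key observation is that $\Gamma(f)$ has no transition leaving $H$: a transition out of $H$ would have to flip component $1$ from a configuration where $x_1=c_1=f_1(x)$, which is impossible. Hence any cycle of $\Gamma(f)$ lies entirely in $H$ or entirely in $\B^n\setminus H$. On each of these two sub-hypercubes, the subgraph of $\Gamma(f)$ induced on it coincides with the asynchronous graph of the Boolean network on $n-1$ components obtained by freezing $x_1$ to its value there; its interaction graph is a subgraph of the one induced by $G(f)$ on $\{2,\dots,n\}$, hence still acyclic, so the induction hypothesis forbids cycles. (Alternatively, one can bound by $2^{i-1}$ the number of updates of component $i$ along any path of $\Gamma(f)$, using that $f_i$ reads only components $<i$ and so $x_i$ can change at most once between two successive changes of $(x_1,\dots,x_{i-1})$; this shows every path is finite.)

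Finally, for the geodesic from an arbitrary $x$ to the fixed point $y$ I would argue greedily: while $x\neq y$, let $i$ be the least index with $x_i\neq y_i$. Then $(x_1,\dots,x_{i-1})=(y_1,\dots,y_{i-1})$, and since $f_i$ reads only those coordinates and $f_i(y)=y_i$, we get $f_i(x)=y_i\neq x_i$; thus flipping component $i$ is a transition of $\Gamma(f)$, and it decreases the Hamming distance to $y$ by one. Iterating $d_H(x,y)$ times yields a path from $x$ to $y$ updating each component at most once, i.e.\ a geodesic. The topological-order reductions and the greedy construction are routine; the step that needs care is the acyclicity argument, namely making fully precise that no transition leaves $H$ and that the restriction of $\Gamma(f)$ to $H$ (and to $\B^n\setminus H$) is genuinely the asynchronous graph of a smaller network with acyclic interaction graph.
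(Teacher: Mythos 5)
Your proof is correct. The paper states this as Robert's theorem and cites it without proof, and your argument --- reducing to triangular form via a topological ordering of $G(f)$, using the invariant half-cube $H=\{x:x_1=c_1\}$ (or the $2^{i-1}$ update bound) for acyclicity of $\Gamma(f)$, and greedily flipping the least coordinate disagreeing with the fixed point for the geodesic --- is the standard one, with all three parts checking out.
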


In other words, $d_{\Gamma(f)}(x,y)=d_H(x,y)$ for every $x\in\B^n$. However, the acyclicity of $G(f)$ is not sufficient for $\Gamma(f)$ to have a short diameter. Indeed, in a rather different setting, Domshlak \cite{D02} proved (a slightly stronger version of) the following result.  

\begin{theorem}\label{thm:diam_acyclic}
For every $n\geq 8$ there exists $f:\B^n\to\B^n$ such that $G(f)$ is acyclic and $\diam(\Gamma(f))\geq 1.5^{\frac{n}{2}}$.
\end{theorem}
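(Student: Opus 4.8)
The plan is to construct $f:\B^n\to\B^n$ by partitioning the $n$ coordinates into $m:=\lfloor n/2\rfloor$ two-coordinate \emph{modules} $M_1,\dots,M_m$ (plus, when $n$ is odd, one leftover coordinate which is made constant and ignored), and to organise the local functions as a cascade: the update function of a coordinate of $M_j$ depends only on coordinates of $M_1,\dots,M_j$, and inside a module the dependency is itself acyclic. With this layout $G(f)$ is acyclic by construction, so the task reduces to exhibiting two configurations $x,y$ with $d_{\Gamma(f)}(x,y)\ge 2^{\,m-1}$, which exceeds $1.5^{n/2}$ precisely once $n\ge 8$.

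Each module $M_j$ should carry a distinguished ``reset'' local state and a distinguished ``done'' local state, together with a \emph{forward program} --- a scheme of transitions carrying $M_j$ from reset to done --- and a \emph{backward program}. I would choose the local functions so that, in $\Gamma(f)$, the transition flipping the ``output'' coordinate of $M_j$ is enabled only while $M_{j-1}$ sits in a prescribed local state; then any sequence of transitions carrying $M_j$ from reset to done is forced to drive $M_{j-1}$ all the way through its forward program and then back through its backward program. The point of spending two coordinates per module rather than one is exactly that this ``re-armable, externally gated toggle'' cannot be realised by a single coordinate without creating a loop in $G(f)$ --- a coordinate that changes value more than once forces an arc from its vertex to itself --- whereas two coordinates suffice while still reading only downstream modules, so acyclicity survives. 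I would take $x$ to be the configuration with every module reset, and $y$ the configuration obtained from $x$ by running the forward program of $M_m$. Here $y$ is deliberately \emph{not} a fixed point of $f$: consistently with Theorem~\ref{thm:robert}, from $y$ one can still reach the unique fixed point along a geodesic, but that geodesic leads away from $x$ and yields no shortcut.

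The length bound would then come from an induction on $j$. Writing $g_j$ for the minimum length of a path of $\Gamma(f)$ realising the forward program of $M_j$, one shows that every such path contains two internally disjoint subpaths, each realising the forward or the backward program of $M_{j-1}$, whence $g_j\ge 2\,g_{j-1}$; as $g_1\ge 1$ this gives $g_m\ge 2^{\,m-1}$, and since a shortest path from $x$ to $y$ is a shortest realisation of $M_m$'s forward program, $d_{\Gamma(f)}(x,y)=g_m\ge 2^{\,m-1}$.

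The hard part will be the ``no shortcut'' half of this induction step: ruling out paths that cheat by pre-arranging the gating states of several modules at once, by running programs out of the intended order, or by using transitions of $\Gamma(f)$ that belong to no program. The natural tool is a potential function on configurations that is monotone along every path of $\Gamma(f)$ and changes by a controlled amount at each transition, combined with a case analysis on which coordinate is updated at each step; the genuinely delicate design problem is to pick the local functions so that such a potential exists \emph{and} $G(f)$ remains acyclic, since the tempting way to force sequencing --- feedback --- is exactly what acyclicity forbids, and all control must instead be routed through the acyclic module cascade. The remaining verifications --- acyclicity of the cascade, reachability of $y$ from $x$, and the odd-$n$ padding --- are routine.
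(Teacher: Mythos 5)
First, a remark on context: the paper does not prove this theorem at all --- it is quoted from Domshlak~\cite{D02}, so there is no in-paper proof to compare against. Your proposal therefore has to stand on its own, and on its own it has a genuine gap: the network $f$ is never actually constructed. The theorem is an existence statement, and everything that is hard about it --- choosing the local functions of each two-coordinate module so that (i) the dependency structure is acyclic, (ii) $y$ is reachable from $x$ at all, and (iii) \emph{every} path from $x$ to $y$ is forced through the doubling recursion --- is deferred with phrases like ``I would choose the local functions so that\dots'' and ``the genuinely delicate design problem is to pick the local functions so that such a potential exists.'' The doubling inequality $g_j\ge 2g_{j-1}$ is exactly the statement that no shortcut exists, and it cannot be argued even in outline without the concrete functions in hand: as written, ``the transition flipping the output of $M_j$ is enabled only while $M_{j-1}$ sits in a prescribed local state'' would only force $M_{j-1}$ to \emph{reach} that state once, not to traverse its forward and then its backward program; to get the factor $2$ you need $M_j$'s program to contain two gated steps conditioned on two different states of $M_{j-1}$ separated by a full traversal, and you need to rule out interleavings that pre-position several modules. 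None of this is established. The strategy itself is sound in spirit --- a cascaded binary-counter gadget with two bits per level is essentially how such exponential lower bounds for acyclic dependency structures are obtained --- but a strategy is not a proof of existence.

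Two smaller points. The justification for using two coordinates per module is wrong as stated: it is \emph{not} true that ``a coordinate that changes value more than once forces an arc from its vertex to itself.'' A coordinate $i$ with $f_i$ depending only on other coordinates can flip arbitrarily many times along a path of $\Gamma(f)$ as those coordinates change (indeed the theorem itself forces some coordinate to be updated exponentially often along a shortest path, while $G(f)$ has no loops at all). The real reason one coordinate per module is insufficient is that a single externally driven coordinate provides only one gated flip per ``program,'' which yields an additive rather than multiplicative recursion. Finally, the arithmetic closing step is fine: with $m=\lfloor n/2\rfloor$ one checks $2^{m-1}\ge 1.5^{n/2}$ for all $n\ge 8$ (and already for even $n\ge 6$), so if the construction and the bound $g_m\ge 2^{m-1}$ were supplied, the statement would follow.
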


Now, what can be said if $G(f)$ contains cycles ? Thomas highlighted the fact that the distinction between positive and negative cycles is highly relevant (see \cite{TA90,TK01} for instance). The subtlety and versatility of the influences of interactions between positive and negative cycles lead researchers to first focus on networks with only positive cycles or only negative cycles. In particular, the following basic properties was proved in \cite{A08,RRT08,R10}: {\em If $G(f)$ has no positive (resp. negative) cycles, then $f$ has at most (resp. at least) one fixed point.} This gives a nice proof by dichotomy of the first assertion in Theorem~\ref{thm:robert}.

\smallskip
In \cite{MRRS13}, the authors showed that the absence of negative cycles essentially corresponds to the study of {\bf monotone networks}, that is, Boolean networks $f:\B^n\to\B^n$ such that 
\[
x\leq y~\Rightarrow~f(x)\leq f(y)
\]
where $\leq$ is the usual partial order ($x\leq y$ if and only if $x_i\leq y_i$ for all $i\in[n]$). More precisely, they proved the following: {\em If $G(f)$ is strongly connected and without negative cycles, then there exists a monotone network $f':\B^n\to\B^n$ such that: $G(f)$ and $G(f')$ have the same underlying unsigned digraph, and $\Gamma(f)$ and $\Gamma(f')$ are isomorphic}. Furthermore, they proved the following reachability result, that shares some similarities with Theorem~\ref{thm:robert}. 

\begin{theorem}
If $f:\B^n\to\B^n$ is monotone, then, for every configuration~$x$, $\Gamma(f)$ has a geodesic from $x$ to a fixed point $y$ of $f$. 
\end{theorem}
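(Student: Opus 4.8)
The plan is to reduce the statement to a \emph{one-step claim}: if $x$ is not a fixed point of $f$, then $\Gamma(f)$ has a transition $x\to x'$ with $d(x')=d(x)-1$, where $d(z):=\min\{d_H(z,y):y\in\mathrm{Fix}(f)\}$ is the distance from $z$ to the set of fixed points. This suffices: starting from $x$ and applying the claim repeatedly produces a path $x=x^0\to x^1\to\cdots\to x^{d(x)}$ with $d(x^k)=d(x)-k$, hence $x^{d(x)}$ is a fixed point, and since any fixed point is at Hamming distance $\ge d(x)$ from $x$ while this path has length $d(x)$, the path is a geodesic. (Here one uses only that a single transition changes $d(\cdot)$ by at most $1$.) So the rest of the proof is devoted to the one-step claim.

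First I would dispose of the monotone‑comparable cases, where in fact the whole theorem follows at once. Suppose $f(x)\ge x$. Run the process that, as long as some coordinate $i$ has $x_i=0<1=f_i(x)$, updates one such $i$. Along the resulting path the configurations only increase, so once a coordinate has been set to $1$ it stays $1$ and is never touched again; hence the path is a geodesic, it halts, and at the final configuration $a$ no coordinate can be raised. Monotonicity then forces $f(a)=a$: if $a_i=0$ then $f_i(a)=0$ by the halting condition; if $a_i=1$ then either $x_i=1$, so $f_i(a)\ge f_i(x)=1$, or $i$ was raised at some configuration $c\le a$ with $f_i(c)=1$, so $f_i(a)\ge f_i(c)=1$. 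Dually if $f(x)\le x$. So from now on I may assume $x$ and $f(x)$ are incomparable.

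For the one-step claim I would fix a closest fixed point $y$ (so $d_H(x,y)=d(x)$) and set $\Delta=\{i:x_i\neq y_i\}$, $U=\{i:x_i<y_i\}$, $D=\{i:x_i>y_i\}$. If some $i\in\Delta$ has $f_i(x)=y_i$, updating $i$ gives a transition $x\to x'$ with $d_H(x',y)=d(x)-1$, hence $d(x')\le d(x)-1$, and we are done. Otherwise we are in the \emph{stuck} situation: $f_i(x)=x_i$ for every $i\in\Delta$. Let $a=x\wedge y$ and $b=x\vee y$. From $f(a)\le f(x)\wedge f(y)$, together with $f(y)=y$ and the stuck condition $f_i(x)=x_i=0$ for $i\in U$, one checks coordinatewise that $f(a)\le a$; dually $f(b)\ge b$. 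Iterating $f$ from $a$ and from $b$ in the finite lattice $\B^n$ therefore yields fixed points $y^-\le a$ and $y^+\ge b$. If $U=\emptyset$ then $y\le x$ and the stuck condition gives $f(x)\ge x$, handled above; if $D=\emptyset$, then $f(x)\le x$. So the remaining case is $U\neq\emptyset\neq D$. Here $x$ is not a fixed point, so some $j\notin\Delta$ has $f_j(x)\neq x_j$; and unless $f(x)\le x$ (already handled) we may take $x_j=0$, $f_j(x)=1$ (the opposite case is the $f\mapsto\overline{f}$ mirror image). We update $j$, obtaining $x\to x'=x+e_j$, and the task is to show $d(x')=d(x)-1$.

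The crux is exactly this last step. Since $x_j=y_j=0$, updating $j$ moves \emph{away} from $y$, so one cannot measure $x'$ against $y$ and must instead exhibit a closest fixed point towards which coordinate $j$ points. The natural candidate is $y^+$: because $f_j(b)\ge f_j(x)=1$, the closure process that produces $y^+$ from $b$ must raise coordinate $j$, so $y^+_j=1$ and $d_H(x',y^+)=d_H(x,y^+)-1$; the whole one-step claim then comes down to showing that $y^+$ is itself at distance $d(x)$ from $x$ — equivalently, that the raise-closure of $b$ overshoots beyond $b$ by exactly $|D|$ coordinates. I expect this overshoot estimate to be the technical heart of the proof: it has to be extracted from the monotone structure (a coordinate that wants to be raised at $x$ still wants to be raised at every configuration $\ge x$; raising a coordinate never decreases any $f_k$; and $y$ is closest to $x$), and getting the bookkeeping exactly right — symmetrically for the $f\mapsto\overline{f}$ reduction — is where the real work lies.
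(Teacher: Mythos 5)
(A preliminary remark: the paper only quotes this theorem from the literature and contains no proof of it, so I can only judge your argument on its own terms.) Your reduction rests on the one-step claim that every non-fixed $x$ admits a transition $x\to x'$ with $d(x')=d(x)-1$, where $d$ is the Hamming distance to the set of \emph{all} fixed points. That claim is false. Take $n=3$ and $f(z)=(z_1\vee z_2,\ z_1\vee z_2,\ z_1\wedge z_2)$: this is monotone, its fixed points are exactly $000$ and $111$, and for $x=100$ we have $d(x)=1$ while $f(100)=110$, so the unique transition out of $100$ goes to $110$ with $d(110)=1$. Your case analysis happens to route such examples into the comparable case $f(x)\ge x$, where you prove the theorem for that particular $x$ directly instead of proving the one-step claim --- and this is exactly where the reduction breaks. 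The iteration ``apply the claim repeatedly'' needs the claim at \emph{every} intermediate configuration $x^k$; if some $x^k$ is comparable with $f(x^k)$, your closure argument yields a geodesic from $x^k$ to \emph{some} fixed point $z$, but $d_H(x^k,z)$ may exceed $d(x^k)$ (in the example, the increasing closure of $100$ ends at $111$, at distance $2$, while $000$ is at distance $1$), and nothing guarantees that prepending the earlier steps --- which were chosen to approach a closest fixed point of $x^0$, not $z$ --- produces a geodesic from $x^0$ to $z$: a coordinate lowered towards $y$ on the way to $x^k$ can perfectly well be raised again during the increasing closure from $x^k$, since the configurations involved are not comparable. So even granting every assertion you make, the pieces do not assemble into a proof of the theorem.

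Independently, the step you yourself flag as ``the technical heart'' is genuinely missing, and I do not believe it can be filled in the form you state it. You need $d_H(x,y^+)=d(x)$, i.e.\ that the increasing closure of $b=x\vee y$ raises \emph{exactly} $|D|$ coordinates beyond $b$. Monotone networks have no such bounded-expansion property: raising the coordinates of $D$ can trigger an arbitrarily long cascade (think of chain dependencies $f_{i+1}(z)=z_i$), so $w(y^+)-w(b)$ can greatly exceed $|D|$, in which case $y^+$ is not a closest fixed point and cannot certify $d(x')=d(x)-1$. To rescue the step one would have to exhibit some \emph{other} closest fixed point agreeing with $x'$ at $j$, or better, abandon the ``distance to the nearest fixed point'' potential altogether and let the target fixed point be constructed along with the path rather than fixed in advance (e.g.\ by inducting on a quantity such as $|\{i:f_i(x)\neq x_i\}|$). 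The parts of your write-up that are correct --- the comparable case, the verification that $f(x\wedge y)\le x\wedge y$ and $f(x\vee y)\ge x\vee y$ under the stuck condition, and $y^+_j=1$ --- are the easy ones; as it stands you have only proved the theorem for configurations $x$ comparable with $f(x)$.
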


Here, we prove the following theorem, that shows that there may exist, under the same hypothesis, a configuration $x$ and a fixed point $y$ such that $y$ is reachable from $x$ with paths of exponential length only. This result contrasts with the previous one, and may be seen as an adaptation of Theorem~\ref{thm:diam_acyclic} for monotone networks.  

\begin{theorem}\label{thm:diam_monotone}
For every even $n$, there exists a monotone network $f:\B^n\to\B^n$, two configurations $x$ and $y$ such that $y$ is a fixed point of $f$ and 
\[
\diam(\Gamma(f))\geq d_{\Gamma(f)}(x,y)\geq  2^{\frac{n}{2}}. 
\]
\end{theorem}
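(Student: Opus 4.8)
The plan is to reduce Theorem~\ref{thm:diam_monotone} to the simulation theorem announced in the abstract, applied to a well-chosen ``base'' network. That theorem turns a Boolean network $g:\B^m\to\B^m$ \emph{without negative loop} into a monotone network $\ov g:\B^{2m}\to\B^{2m}$ whose asynchronous graph contains a faithful copy of $\Gamma(g)$: writing $\ov z$ for the configuration of $\ov g$ corresponding to a configuration $z$ of $g$, fixed points are preserved, $\ov y$ is reachable from $\ov x$ in $\Gamma(\ov g)$ iff $y$ is reachable from $x$ in $\Gamma(g)$, and --- this is the ``very specific'' part --- each transition of $\Gamma(g)$ is realised by a path of length two in $\Gamma(\ov g)$ with no shortcuts, so that $d_{\Gamma(\ov g)}(\ov x,\ov y)=2\,d_{\Gamma(g)}(x,y)$. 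Granting this, it suffices to exhibit, for $m=n/2$, a network $g:\B^m\to\B^m$ without negative loop, a configuration $a$ and a fixed point $b$ of $g$ with $d_{\Gamma(g)}(a,b)\geq 2^{m-1}$; then $f:=\ov g$, $x:=\ov a$, $y:=\ov b$ is a monotone network on $n=2m$ components with $y$ a fixed point reachable from $x$ and $d_{\Gamma(f)}(x,y)\geq 2^{n/2}$. Note that $f$ must then have several fixed points --- otherwise the monotone geodesic property would force $d_{\Gamma(f)}(x,y)\leq n$ --- so $G(f)$ contains a positive cycle; in particular an acyclic base $g$, whose $\Gamma(g)$ has geodesics to a unique fixed point by Theorem~\ref{thm:robert}, cannot meet the requirement, and neither would a Domshlak-type example, whose diameter $1.5^{m/2}$ is far too small once the component count is doubled. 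A genuinely ``counting'' base is needed.

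Such a base is provided by a \emph{Hamiltonian path network}. Fix a Hamiltonian path $p^0,p^1,\dots,p^{2^m-1}$ of the $m$-cube --- say the reflected binary Gray code --- and set $g(p^t):=p^{t+1}$ for $0\leq t<2^m-1$ and $g(p^{2^m-1}):=p^{2^m-1}$; since the $p^t$ are pairwise distinct and exhaust $\B^m$, this defines a map $g:\B^m\to\B^m$. Two verifications are needed. First, $g$ has no negative loop: a negative loop on coordinate $i$ would provide configurations $z^0,z^1$ agreeing off coordinate $i$, with $i$-th entries $0$ and $1$, such that the path leaves $z^0$ by flipping $i$ (so $z^1$ directly follows $z^0$) and also leaves $z^1$ by flipping $i$ (so $z^0$ directly follows $z^1$); the path would then contain $\cdots z^0 z^1 z^0\cdots$, contradicting simplicity (if $z^0$ or $z^1$ is the terminal vertex $p^{2^m-1}$, which $g$ fixes, the condition fails outright). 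Second, $\Gamma(g)$ is exactly the directed path $p^0\to p^1\to\cdots\to p^{2^m-1}$: consecutive vertices of a Hamiltonian path differ in a single coordinate, so $g(p^t)$ differs from $p^t$ in exactly one coordinate, giving $p^t$ a unique out-transition, to $p^{t+1}$, for $t<2^m-1$ and none for $t=2^m-1$. Hence $p^{2^m-1}$ is a fixed point, the only walk out of $p^0$ is this path, and $d_{\Gamma(g)}(p^0,p^{2^m-1})=2^m-1$. Taking $a:=p^0$ and $b:=p^{2^m-1}$ meets the reduction's requirement with room to spare.

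The main obstacle is not the base network --- the Hamiltonian path construction is elementary --- but making precise that the simulation transfers the bound \emph{without losing the factor $2$}: one must exclude shortcuts in $\Gamma(\ov g)$ and prove $d_{\Gamma(\ov g)}(\ov a,\ov b)\geq 2\,d_{\Gamma(g)}(a,b)$, not merely $\geq d_{\Gamma(g)}(a,b)$, a distinction that matters since the weaker bound gives only $2^{n/2}-1$. This is exactly the delicate content of the ``very specific'' simulation, used here as a black box; with it one obtains $d_{\Gamma(f)}(x,y)\geq 2(2^m-1)=2^{n/2+1}-2\geq 2^{n/2}$ for every even $n\geq 2$, while the reachability clause of the simulation theorem applied to the path $p^0\to\cdots\to p^{2^m-1}$ in $\Gamma(g)$ shows that $y=\ov b$ is indeed reachable from $x=\ov a$. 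This completes the argument.
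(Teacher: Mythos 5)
Your proposal is correct and follows essentially the same route as the paper: the paper also derives Theorem~\ref{thm:diam_monotone} from the embedding theorem (Theorem~\ref{thm:embedding}) by taking as base network the Gray-code ``Hamiltonian path'' map $g(p^t)=p^{t+1}$ on $n/2$ components, observing that $G(g)$ has no negative loop (via the same ``cycle of length two'' argument you sketch) and that $d_{\Gamma(g)}(p^0,p^{2^{n/2}-1})=2^{n/2}-1$, which the factor-$2$ distance preservation turns into $2^{n/2+1}-2\geq 2^{n/2}$. Your additional remarks on why acyclic or Domshlak-type bases cannot work are sound but not needed.
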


The proof is by construction, and the idea for the construction is rather simple. Let $A$, $B$ and $C$ be the sets of configurations that contains $n/2-1$, $n/2$ and $n/2+1$ ones. Clearly, $A$, $B$ and $C$ are antichains of exponential size, and, in these antichains, obviously, the monotonicity of $f$ doesn't apply. This leaves enough freedom to defined $f$ on $A\cup B\cup C$ in such a way that subgraph $\Gamma(f)$ induced by $A\cup B\cup C$ contains a configuration $x$ and  fixed point $y$ reachable from $x$ with paths of exponential length only. To obtain a network as in the theorem, it is then sufficient to extend $f$ on the whole space $\B^n$ by keeping $f$ monotone and without creating shortcuts from $x$ to $y$ in the asynchronous graph. This idea, that consists in using large antichains to construct special monotone functions, is also present in \cite{GRR15} and \cite{ADG04c} for instance. 

\smallskip
Let $f:\B^n\to\B^n$ be any Boolean network such that $G(f)$ has no negative loops. With the technic described above, we can go further and prove that $\Gamma(f)$ can be embedded in the asynchronous graph $\Gamma(f')$ of a monotone network $f':\B^{2n}\to\B^{2n}$ in such a way that fixed points and distances between configurations are preserved. The formal statement follows. If $x,y\in\B^n$, then the concatenation $(x,y)$ is seen as a configuration of $\B^{2n}$ and, conversely, each configuration in $\B^{2n}$ is seen as the concatenation of two configurations in $\B^n$. As usual, we denote by $\ov{x}$ the configuration obtained from $x$ by switching every component. 

\begin{theorem}[Main results]\label{thm:embedding} 
Let $f:\B^n\to\B^n$. If $G(f)$ has no negative loops, then there exists a monotone network $f':\B^{2n}\to\B^{2n}$ such that the following two properties holds. First, $x$ is a fixed point of $f$ if and only if $(x,\ov{x})$ is a fixed point of $f'$. Second, for all $x,y\in\B^n$, $\Gamma(f)$ has a path from $x$ to $y$ of length $\ell$ if and only if $\Gamma(f')$ has a path from $(x,\bar x)$ to $(y,\bar y)$ of length~$2\ell$. 
\end{theorem}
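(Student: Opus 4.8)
The plan is to obtain $f'$ by \emph{doubling} each variable. Write a configuration of $\B^{2n}$ as a pair $(u,v)$ with $u,v\in\B^n$, thinking of $u$ as a ``positive copy'' of $x$ and $v$ as a ``negative copy'' of $\ov x$; the diagonal $D:=\{(x,\ov x):x\in\B^n\}$ will be an isomorphic copy of $\B^n$ inside $\B^{2n}$, $f'$ will imitate $f$ on $D$, and every crossing of $D$ will cost two asynchronous updates. Fix $i\in[n]$. Since $G(f)$ has no negative loop, $f_i$ is nondecreasing in its $i$-th variable, so, writing $g_i(x_{-i})=f_i(\dots,1,\dots)$ and $h_i(x_{-i})=f_i(\dots,0,\dots)$ for the two cofactors at coordinate $i$, one has $h_i\le g_i$, $f_i(x)=(x_i\wedge g_i(x_{-i}))\vee h_i(x_{-i})$, and dually $\ov{f_i}(x)=(\ov{x_i}\wedge\ov{h_i}(x_{-i}))\vee\ov{g_i}(x_{-i})$. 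For a Boolean function $\phi$ of the variables $x_{-i}$, let $\widehat\phi(u_{-i},v_{-i})$ denote the monotone function obtained from the canonical DNF of $\phi$ by replacing every literal $x_j$ by $u_j$ and every literal $\ov{x_j}$ by $v_j$; then $\psi\le\phi$ implies $\widehat\psi\le\widehat\phi$, and $\widehat\phi(u_{-i},\ov{u_{-i}})=\phi(u_{-i})$. I would then define
\[
f'_i(u,v)=(u_i\wedge v_i)\vee(u_i\wedge\widehat{g_i})\vee(v_i\wedge\widehat{h_i}),\qquad f'_{n+i}(u,v)=(v_i\wedge\widehat{\ov{h_i}})\vee\widehat{\ov{g_i}},
\]
the hats being evaluated at $(u_{-i},v_{-i})$. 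Each coordinate of $f'$ is a monotone formula in $u,v$, so $f'$ is monotone; evaluating on $D$ and using the two displayed decompositions gives the diagonal identities $f'_i(x,\ov x)=f_i(x)$ and $f'_{n+i}(x,\ov x)=\ov{f_i(x)}$, i.e.\ $f'(x,\ov x)=(f(x),\ov{f(x)})$. Note also that if $u_i=v_i$ then $f'_i(u,v)=u_i$: once the $i$-th entries of $u$ and $v$ agree, component $i$ is ``frozen''.

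Property~1 is then immediate from the diagonal identities: $(x,\ov x)$ is a fixed point of $f'$ iff $f(x)=x$. For the ``only if'' part of Property~2, given a path $z^0\to z^1\to\cdots\to z^\ell$ of $\Gamma(f)$ whose $t$-th step updates coordinate $i=i_t$, I would verify that
\[
(z^t,\ov{z^t})\ \to\ (z^{t+1},\ov{z^t})\ \to\ (z^{t+1},\ov{z^{t+1}})
\]
is a legal pair of transitions of $\Gamma(f')$ (update $u_i$, then $v_i$): the first because $f'_i(z^t,\ov{z^t})=f_i(z^t)=z^{t+1}_i\neq z^t_i$; the second because $z^{t+1}_{-i}=z^t_{-i}$ and $\ov{z^t_{-i}}$ are complementary, so the hats in $f'_{n+i}$ evaluate exactly and $f'_{n+i}(z^{t+1},\ov{z^t})=\ov{f_i(z^t)}=\ov{z^{t+1}_i}\neq\ov{z^t_i}$. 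Concatenating these $\ell$ length-$2$ blocks gives a path of length $2\ell$ from $(x,\ov x)$ to $(y,\ov y)$.

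The ``if'' part is the heart of the matter. Let $P$ be a path of $\Gamma(f')$ from $(x,\ov x)$ to $(y,\ov y)$. The potential $\Phi(u,v):=|\{i:u_i=v_i\}|$ is $0$ exactly on $D$ and changes by $\pm1$ at every transition (each transition flips one $u_i$ or one $v_i$, toggling one indicator $[u_i=v_i]$); hence $P$ has even length, say $2\ell$, with $\ell$ ``activating'' transitions (increasing $\Phi$) and $\ell$ ``deactivating'' ones. Moreover, for each $i$ the transitions touching coordinate $i$ (those updating component $i$ or $n+i$) alternate activating/deactivating and, as $i$ begins and ends with $u_i\neq v_i$, group into $\ell_i$ activation–deactivation blocks, with $\sum_i\ell_i=\ell$. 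The idea is to project $P$ onto a path $Q$ of $\Gamma(f)$ by contracting each block to a single step: a block on coordinate $i$ of the ``expected'' shape — $u_i$ flips (via component $i$) to activate, then $v_i$ flips (via component $n+i$) to deactivate — has the net effect of flipping the $i$-th entry of the $u$-part, so we let $Q$ perform that flip; then $Q$ runs from $x$ to $y$ with length exactly $\ell$. For this to be a walk of $\Gamma(f)$ two things are needed: (a) every block of $P$ has the expected shape (equivalently, component $n+i$ never activates coordinate $i$, and — by the ``frozen'' remark — component $i$ never deactivates it); and (b) the flip performed by $Q$ is legal, i.e.\ whenever $P$ makes a genuine component-$i$ transition at $(u,v)$ one has $f'_i(u,v)=f_i(u)$. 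Statement (b) is \emph{false} for arbitrary configurations reachable from $D$ — the hats ``misread'' any coordinate $j$ currently in the agreeing state $u_j=v_j$ — so both (a) and (b) must exploit that $P$ returns to $D$.

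I expect (a)–(b) to be the real obstacle. The plan is to carry along $P$ an invariant recording, for every agreeing coordinate $j$, the value $u_j$ and which of components $j,n+j$ last activated $j$, strong enough to guarantee that on the verge of any genuine transition all coordinates feeding a relevant hat are ``consistent'' (so that hat evaluates as on $D$), and to show that the only configurations from which $D$ is reachable at all are the ones satisfying this invariant. The no-negative-loop hypothesis is what makes the argument work: the inequality $h_i\le g_i$ (together with its dual) forces the $i$-th pair of components to behave one-sidedly — no component can activate \emph{and} later deactivate its own coordinate without some other coordinate changing in between, and a configuration reached by a ``bad'' (misreading) transition cannot return to $D$ — which is exactly what pins the invariant down and excludes blocks of unexpected shape. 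Granting this, each step of $Q$ is a genuine transition of $\Gamma(f)$, so $Q$ is the desired path of length $\ell$ from $x$ to $y$, and the theorem follows; everything outside the invariant is routine bookkeeping.
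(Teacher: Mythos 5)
Your construction (dual-rail monotone completion: replace $\ov{x_j}$ by a fresh positive variable $v_j$, add the term $u_i\wedge v_i$, and use the decomposition $f_i=(x_i\wedge g_i)\vee h_i$ made possible by the absence of negative loops) is a genuinely different function from the paper's $f'$, and the parts you actually verify are correct: monotonicity, the diagonal identities $f'(x,\ov x)=(f(x),\ov{f(x)})$ (hence Property~1), and the lifting of each transition $x\to y=\ov{x}^i$ to the two-step path $(x,\ov x)\to(y,\ov x)\to(y,\ov y)$. The parity argument via $\Phi(u,v)=|\{i:u_i=v_i\}|$ correctly shows every $\Omega$-to-$\Omega$ path has even length. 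But the converse half of Property~2 — that a length-$2\ell$ path of $\Gamma(f')$ between diagonal points forces a length-$\ell$ path of $\Gamma(f)$ — is exactly where you stop: you name the needed facts (a) and (b), observe yourself that (b) is false off the diagonal, and then write ``I expect (a)--(b) to be the real obstacle \dots\ Granting this \dots''. That is the entire content of the theorem for the paper's application (the exponential lower bound on $d_{\Gamma(f')}$ needs precisely that no shortcut exists), so the proof is not complete.

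Moreover, claim (a) is false as stated, which shows the missing invariant is not routine. In your $f'$, component $n+i$ \emph{does} activate coordinate $i$: at any $(x,\ov x)$ with $f_i(x)\neq x_i$ one has $f'_{n+i}(x,\ov x)=\ov{f_i(x)}=x_i\neq\ov{x_i}=v_i$, so flipping $v_i$ is a legal transition that puts coordinate $i$ into the agreeing state; once there, component $i$ is frozen by the term $u_i\wedge v_i$, so the only way out is for component $n+i$ to flip $v_i$ back, producing a ``wasted'' block of net effect zero. If such a block can complete while the path still returns to $\Omega$ (which can happen once other coordinates have moved and changed the value of $\widehat{\ov{g_i}}$), the projection has length strictly less than $\ell$ and the stated iff fails; and while a coordinate sits in the agreeing state, the hats of other components misread it (e.g.\ $u_j=v_j=0$ forces every hat depending on $j$ to $0$, even for the constant function $1$), so the projected steps need not be transitions of $\Gamma(f)$ at all. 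You would have to prove that no configuration reached by such a bad activation or misread transition can return to $\Omega$; nothing in the proposal does this. Compare with the paper: its $f'$ is rigged so that everything outside the weight band $n-1\le w(x,y)\le n+1$ collapses to the all-zeroes or all-ones fixed point (Lemma~\ref{lem:01}), and an explicit case analysis (Lemma~\ref{lem:transition}) then shows that the \emph{only} $\Omega$-to-$\Omega$ segments without internal $\Omega$-vertices are the two canonical length-$2$ paths. Some argument of comparable strength is needed for your $f'$, and until it is supplied the reverse implication — and hence the theorem — is unproven.
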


Theorem~\ref{thm:diam_monotone} is now an easy corollary of Theorem~\ref{thm:embedding}. 

\begin{proof}[of Theorem~\ref{thm:diam_monotone} assuming Theorem~\ref{thm:embedding}]
Let $r=2^n$, and let $x^1,x^2,\dots,x^r$ be any enumeration of the elements of $\B^n$ such that $d_H(x^k,x^{k+1})=1$ for all $1\leq k<r$ (take the Gray code for instance). Let $f:\B^n\to\B^n$ be defined by $f(x^k)=x^{k+1}$ for all $1\leq k<r$ and $f(x^r)=x^r$. Let $x=x^0$ and $y=x^r$. Then $y$ is the unique fixed point of $f$. Furthermore, since the set of transitions of $\Gamma(f)$ is $\{x^k\to x^{k+1}:1\leq k<r\}$, we deduce that $d_{\Gamma(f)}(x,y)=2^n-1$. We also deduce that $G(f)$ has no negative loops (this is an easy exercise to prove that $G(f)$ has a negative loop if and only if $\Gamma(f)$ has a cycle of length two). Hence, by Theorem~\ref{thm:embedding}, there exists a monotone network $f':\B^{2n}\to\B^{2n}$ such that $(y,\ov{y})$ is a fixed point and 
\[
d_{\Gamma(f')}((x,\ov{x}),(y,\ov{y}))=2d_{\Gamma(f)}(x,y)=2^{n+1}-2\geq 2^n.
\]
\qed
\end{proof}

The paper is organized as follows. The proof of Theorem~\ref{thm:embedding} is given in Section~\ref{sec:proof}. A conclusion and some open questions are then given in Section~\ref{sec:conclusion}. 

\section{Proof of Theorem~\ref{thm:embedding}}\label{sec:proof}

We first fix some notations:
\begin{align*}
\ov{x}^i&:=(x_1,\dots,\ov{x_i},\dots,x_n)&(\textrm{$x\in\B^n$ and $i\in [n]$}),\\
w(x)&:=|\{i\in[n]:x_i=1\}|&(\textrm{$x\in\B^n$}),\\
w(x,y)&:=w(x)+w(y)&(\textrm{$x,y\in\B^n$}),\\
\Omega&:=\{(x,\ov{x}):x\in\B^n\}. 
\end{align*}

\smallskip
The function $f'$ in Theorem~\ref{thm:embedding} is defined as follows from $f$. 

\begin{definition}
Given $f:\B^n\to\B^n$, we define $f':\B^{2n}\to\B^{2n}$ by: for all $i\in [n]$ and $x,y\in\B^n$, 
\[
f'_i(x,y)=
\left\{
\begin{array}{ll}
f_i(x)&\text{ if $y=\ov{x}$ or $\ov{y}^i=\ov{x}$}\\[2mm]
\ov{x_i}&\text{ if $w(x,y)=n$ and $y\neq \ov{x}$}\\[2mm]
1&\text{ if $w(x,y)=n+1$ and $\ov{y}^i\neq\ov{x}$}\\[2mm]
0&\text{ if $w(x,y)=n-1$ and $\ov{y}^i\neq\ov{x}$}\\[2mm]
1&\text{ if $w(x,y)\geq n+2$}\\[2mm]
0&\text{ if $w(x,y)\leq n-2$}
\end{array}
\right.
\quad\text{and}\quad
f'_{n+i}(x,y)=\ov{f'_i(\ov{y},\ov{x})}.
\]
\end{definition}

\begin{remark}
$f'_i(x,y)=\ov{f'_{n+i}(\ov{y},\ov{x})}$. 
\end{remark}

\begin{remark}
Let $A$, $B$ and $C$ be sets of configurations $(x,y)\in\B^{2n}$ such that $w(x,y)$ is $n-1$, $n$ and $n+1$, respectively (these are the three sets discussed in the introduction) (we have $\Omega\subseteq B$). One can see that $f'_i$ behave as $f_i$ when $x$ and $y$ are mirroring each other ($y=\ov{x}$) or almost mirroring each other ($\ov{y}^i=\ov{x}$); and in both cases, $(x,y)$ lies in $A\cup B\cup C$. One can also see that $f'_i$ equals $0$ below the layer $A$ and equals $1$ above the layer $C$. The same remarks apply on $f'_{n+i}$, excepted that $f'_{n+1}$ behaves as the negation $\ov{f_i}$ in $A\cup B\cup C$. Hence, roughly speaking, $f$ behaves as $(f,\ov{f})$ in the middle layer $A\cup B\cup C$, and it converges toward the all-zeroes or all-ones configuration outside this layer.
\end{remark}

\begin{lemma}\label{lem:monotone} 
If $G(f)$ has no negative loops, then $f'$ is monotone.
\end{lemma}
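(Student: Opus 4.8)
The plan is to show that for every $i \in [n]$ and every configuration $(x,y) \in \B^{2n}$, flipping a single coordinate of $(x,y)$ from $0$ to $1$ cannot decrease $f'_i(x,y)$, and similarly for $f'_{n+i}$. By the remark $f'_{n+i}(x,y) = \ov{f'_i(\ov y, \ov x)}$, monotonicity of all the $f'_{n+i}$ follows from monotonicity of all the $f'_i$: if $(x,y) \le (x',y')$ then $(\ov{y'},\ov{x'}) \le (\ov y, \ov x)$, so $f'_i(\ov{y'},\ov{x'}) \le f'_i(\ov y, \ov x)$, hence $\ov{f'_i(\ov y,\ov x)} \le \ov{f'_i(\ov{y'},\ov{x'})}$, i.e.\ $f'_{n+i}(x,y) \le f'_{n+i}(x',y')$. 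So it suffices to prove each $f'_i$ is monotone, and for that it is enough to compare $f'_i$ on two configurations differing in exactly one coordinate $j$, say with a $0$ at coordinate $j$ in the first and a $1$ there in the second.

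The key organizing idea is to stratify by the weight $w(x,y)$, which either stays the same (when the flipped coordinate is one of the ``mirroring'' positions that the definition has already accounted for — but a genuine single flip always changes the weight by $1$) — so in fact a single flip sends weight $k$ to weight $k+1$. I would go through the cases according to which band the pair $(k, k+1)$ falls in. If $k+1 \le n-2$ both values are $0$; if $k \ge n+2$ both are $1$; these are trivial. The interesting transitions are $k = n-2 \to n-1$, $k = n-1 \to n$, $k = n \to n+1$, and $k = n+1 \to n+2$. In the band $w \in \{n-1,n,n+1\}$ one must carefully separate the sub-case where the configuration is on $\Omega$ or on the ``almost-mirror'' set $\{\ov y^i = \ov x\}$ (where $f'_i = f_i(x)$) from the complementary sub-case (where $f'_i \in \{0, \ov{x_i}, 1\}$ is forced). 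For transitions whose endpoints both land in the ``forced'' region the values are, reading off the definition, $0 \le \ov{x_i}$? — not always, so here is where one actually has to think: flipping coordinate $j$ affects $x_i$ only if $j = i$, and flipping $x_i$ from $0$ to $1$ turns $\ov{x_i}$ from $1$ to $0$, which would be a \emph{decrease}. This is precisely the point where the hypothesis ``$G(f)$ has no negative loop'' must enter: I expect that whenever the relevant configurations sit on or near $\Omega$, the value is $f_i(x)$ rather than $\ov{x_i}$, and the no-negative-loop condition guarantees $f_i(\ov x^i) \ge f_i(x) \ge \ldots$ arranges the inequalities correctly; away from $\Omega$ one checks that flipping $x_i$ also changes $w(x,y)$ and hence moves the configuration into a band where the output is a constant $0$ or $1$ that dominates correctly. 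Tracking exactly which of the two formulas ($f_i(x)$ vs.\ the forced constant) applies at each endpoint, using that a flip at coordinate $j = i$ toggles membership in $\Omega$ in a controlled way and that $w(x,y) = n$ on $\Omega$, is the bookkeeping core of the argument.

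The main obstacle, then, is the boundary between the ``behaves like $f$'' region and the ``forced constant'' region: one must verify that every single-coordinate flip either (a) keeps both endpoints in the forced region, where monotonicity is immediate from the weight bands, (b) keeps both endpoints in the $f_i$-region, where one invokes monotonicity-like consequences of no negative loops (in particular, $x \mapsto f_i(x)$ cannot strictly decrease when $x_i$ goes up, since a negative loop at $i$ is forbidden — though one must be careful because $f$ itself need not be monotone, so this only controls the dependence of $f_i$ on $x_i$, and the dependence on $x_j$ for $j \ne i$ could go either way; hence the $f_i$-region must be shown to be entered/exited only via the special positions), or (c) crosses between the two regions in a direction where the forced constant ($1$ above, $0$ below) automatically dominates $f_i \in \{0,1\}$ the right way. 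I would first prove the auxiliary fact that no negative loop at $i$ implies $f_i(x) \le f_i(\ov x^i)$ whenever $x_i = 0$, record the membership conditions for $\Omega$ and the almost-mirror set under a flip, and then dispatch the finitely many band-and-region combinations. I expect most combinations to be one-line verifications once this scaffolding is in place; the lemma is essentially a finite case analysis whose only genuinely mathematical input is the no-negative-loop hypothesis controlling the self-dependence of each $f_i$.
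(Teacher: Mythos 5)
Your plan is sound, all the essential ideas are present, and the verifications you defer do go through; the main difference from the paper is organizational. The paper argues by contradiction on an arbitrary comparable pair $(a,b)<(c,d)$, first confining the weights to $n-1\leq w(a,b)<w(c,d)\leq n+1$, and must therefore also treat a weight gap of $2$ (its Case~1); your reduction to single-coordinate flips (covering pairs of $\B^{2n}$) eliminates that case and stratifies everything by a transition $w\to w+1$, which is arguably cleaner. Your reduction of $f'_{n+i}$ to $f'_i$ via $f'_{n+i}(x,y)=\ov{f'_i(\ov{y},\ov{x})}$ is identical to the paper's. You also correctly locate the only use of the hypothesis: a flip of $x_i$ from $0$ to $1$ moving between the mirror region $y=\ov{x}$ and the almost-mirror region $\ov{y}^i=\ov{x}$ requires $f_i(x)\leq f_i(\ov{x}^i)$ for $x_i=0$, which is exactly the absence of a negative loop at $i$. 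One deferred check deserves to be written out, since it is the crux of your case (c) and not quite a one-liner: starting from $w(x,y)=n$ with $y\neq\ov{x}$ and $x_i=0$ (so the value is $\ov{x_i}=1$), a flip up to weight $n+1$ must land \emph{outside} the almost-mirror region, or the new value would be $f_i(x')$, possibly $0$. This holds because $\ov{y'}^i=\ov{x'}$ forces $y'_i=x'_i$ and hence $w(x',y')=n-1+2x'_i$, so weight $n+1$ forces $x'_i=1$; this is incompatible with $x_i=0$ unless the flip is at coordinate $i$ of $x$, and in that case $\ov{y}^i=\ov{x'}$ combined with $\ov{x'}=\ov{\ov{x}^i}$ gives $y=\ov{x}$, a contradiction. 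With that verification and its symmetric variants dispatched, your outline becomes a complete and correct proof.
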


\begin{proof}
Suppose, for a contradiction, that there exists $a,b,c,d\in\B^n$ and $i\in [n]$ such that 
\[
(a,b)<(c,d)
\text{ and }
f'_i(a,b)>f'_i(c,d).
\]
Then we have 
\[
n-1\leq w(a,b)<w(c,d)\leq n+1.
\]
This leaves three possibilities.
\begin{description}
\item[{\it Case 1: $w(a,b)=n-1$ and $w(c,d)=n+1$}.] 
Since $f'_i(a,b)=1$, we fall in the first case of the definition of $f'_i$, that is, 
\[
f'_i(a,b)=f_i(a)=1\text{ and }\ov{b}^i=\ov{a}. 
\]
Similarly
\[
f'_i(c,d)=f_i(c)=0\text{ and }\ov{d}^i=\ov{c}.
\]
Thus 
\[
(a,b)=(a,\ov{\ov{a}}^i)<(c,d)=(c,\ov{\ov{c}}^i).
\]     
So for all $j\neq i$, we have $a_j\leq c_j$ and $\ov{a_j}=(\ov{\ov{a}}^i)_j\leq (\ov{\ov{c}}^i)_j=\ov{c_j}$ thus $c_j\leq a_j$. So $a_j=c_j$ for all $j\neq i$, that is, $c\in\{a,\ov{a}^i\}$. Since $f_i(a)<f_i(c)$ we have $c=\ov{a}^i$, and since $a\leq c$ we deduce that $a_i=0$. Thus $G(f)$ has a negative arc from $i$ to $i$, a contradiction.
\medskip
\item[{\it Case 2: $w(a,b)=n-1$ and $w(c,d)=n$}.]
As in Case 1, we have 
\[
f'_i(a,b)=f_i(a)=1\text{ and }\ov{b}^i=\ov{a}.
\]
For $f'_i(c,d)$ we have two cases. Suppose first that 
\[
f'_i(c,d)=f_i(c)=0\text{ and }d=\ov{c}.
\]
Then  
\[
(a,b)=(a,\ov{\ov{a}}^i)<(c,d)=(c,\ov{c}).
\]     
So for all $j\neq i$, we have $a_j\leq c_j$ and $\ov{a_j}=(\ov{\ov{a}}^i)_j\leq \ov{c_j}$ thus $c_j\leq a_j$. So $a_j=c_j$ for all $j\neq i$, that is, $c\in\{a,\ov{a}^i\}$. Since $f_i(a)<f_i(c)$ we have $c=\ov{a}^i$, and since $a\leq c$ we deduce that $a_i=0$. Thus $G(f)$ has a negative arc from $i$ to $i$, a contradiction. The other case is 
\[
f'_i(c,d)=\ov{c_i}=0\text{ and }d\neq \ov{c}.
\]    
First, observe that for all $j\neq i$, if $c_j=0$ then $a_j=0$ thus $1=(\ov{\ov{a}}^i)_j\leq d_j$. Since $c_i=1$ we deduce that $\ov{c}\leq d$. Now, suppose that $c_j=d_j=1$ for some $j\in [n]$. Since $w(c,d)=n$, we deduce that there exists $k\neq j$ such that $c_k=d_k=0$, and this  contradicts $\ov{c}\leq d$. Thus, for all $j\in [n]$, either $d_j=0$ or $d_j>c_j$, that is, $d\leq \ov{c}$. Thus $c=\ov{d}$, a contradiction.
\medskip
\item[{\it Case 3: $w(a,b)=n$ and $w(c,d)=n+1$}.]
We obtain a contradiction as in Case~2. 
\end{description}

So we have proven that $f'_i$ is monotone for all $i\in [n]$. It remains to prove that $f'_{n+i}$ is monotone. Using the monotony of $f'_i$ for the implication we get:
\[
\begin{array}{rcl}
(a,b)\leq (c,d)
&\iff& (\ov{c},\ov{d})\leq (\ov{a},\ov{b})\\[2mm]
&\iff& (\ov{d},\ov{c})\leq (\ov{b},\ov{a})\\[2mm]
&\Longrightarrow& f'_i(\ov{d},\ov{c})\leq f'_i(\ov{b},\ov{a})\\[2mm]
&\iff& \ov{f'_i(\ov{b},\ov{a})}\leq\ov{f'_i(\ov{d},\ov{c})}\\[2mm]
&\iff& f'_{i+n}(a,b)\leq f'_{i+n}(c,d).
\end{array}
\]
\qed
\end{proof}

\begin{lemma}\label{lem:fixed} 
For all $x\in\B^n$ we have $f(x)=x$ if and only if $f'(x,\ov{x})=(x,\ov{x})$. 
\end{lemma}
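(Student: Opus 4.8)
The plan is to compute $f'(x,\ov{x})$ explicitly and to show that it equals $(f(x),\ov{f(x)})$; the claimed equivalence then follows at once.

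First I would note that $w(x,\ov{x})=w(x)+w(\ov{x})=w(x)+\bigl(n-w(x)\bigr)=n$, so $(x,\ov{x})$ lies in the middle layer $B$. Fix $i\in[n]$. In the definition of $f'_i(x,y)$ with $(x,y)=(x,\ov{x})$, the condition "$y=\ov{x}$" of the first case holds trivially, so we are in that case and $f'_i(x,\ov{x})=f_i(x)$. This settles the first $n$ coordinates of $f'(x,\ov{x})$.

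Next I would handle the last $n$ coordinates via the identity $f'_{n+i}(x,y)=\ov{f'_i(\ov{y},\ov{x})}$ from the definition. Substituting $(x,y)=(x,\ov{x})$ gives $f'_{n+i}(x,\ov{x})=\ov{f'_i(\ov{\ov{x}},\ov{x})}=\ov{f'_i(x,\ov{x})}=\ov{f_i(x)}$, using the previous paragraph for the last equality. Hence $f'(x,\ov{x})=\bigl(f(x),\ov{f(x)}\bigr)$.

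Finally, $\bigl(f(x),\ov{f(x)}\bigr)=(x,\ov{x})$ holds if and only if $f(x)=x$, since equality of the first blocks forces equality of the second blocks (and conversely); this yields both implications simultaneously. I do not expect any genuine obstacle here: the argument is a direct unfolding of the definition of $f'$. The only point requiring a moment's care is confirming that $(x,\ov{x})$ really falls under the \emph{first} case of the definition of $f'_i$ rather than the "$w(x,y)=n$ and $y\neq\ov{x}$" case — but the condition "$y=\ov{x}$" is exactly what distinguishes them, and it is listed first, so there is no ambiguity.
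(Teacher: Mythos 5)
Your proof is correct and follows essentially the same route as the paper: both arguments unfold the definition at $(x,\ov{x})$, using that the first case gives $f'_i(x,\ov{x})=f_i(x)$ and that $f'_{n+i}(x,\ov{x})=\ov{f'_i(x,\ov{x})}$, so that the second block of coordinates is determined by the first. Your packaging as the explicit identity $f'(x,\ov{x})=(f(x),\ov{f(x)})$ is a clean, equivalent way to state what the paper proves coordinate-wise.
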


\begin{proof}
By definition we have 
\[
f(x)=x\quad\iff\quad f'_i(x,\ov{x})=x_i~\forall i\in[n].
\]
So it is sufficient to prove that 
\[
f'(x,\ov{x})=(x,\ov{x})\quad\iff\quad f'_i(x,\ov{x})=x_i~\forall i\in[n].
\]
The direction $\Rightarrow$ is obvious, and $\Leftarrow$ is a consequence of the following equivalences:
\begin{align*}
f'_i(x,\ov{x})=x_i
&\iff \ov{f'_{n+i}(\ov{\ov{x}},\ov{x})}=x_i\\
&\iff f'_{n+i}(x,\ov{x})=\ov{x_i}\\
&\iff f'_{n+i}(x,\ov{x})=(x,\ov{x})_{n+i}.
\end{align*}
\qed
\end{proof}

\begin{lemma}\label{lem:01} 
For all $x,y\in\B^n$, if $\Gamma(f')$ has a path from $(x,y)$ to $\Omega$ then $n-1\leq w(x,y)\leq n+1$. 
\end{lemma}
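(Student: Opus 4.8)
The plan is to show that the three ``middle layers'' $w(x,y)\in\{n-1,n,n+1\}$ of $\B^{2n}$ cannot be entered from outside along transitions of $\Gamma(f')$, and then to note that $\Omega$ lies entirely inside these layers: every $(x,\ov x)\in\Omega$ has $w(x,\ov x)=w(x)+\bigl(n-w(x)\bigr)=n$. So it suffices to prove that $\Gamma(f')$ has no path from a configuration of weight $\geq n+2$ to $\Omega$, and none from a configuration of weight $\leq n-2$ to $\Omega$.

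First I would record two elementary facts about the definition of $f'_i$. (a) The six cases are mutually exclusive and exhaustive; in particular the first case occurs only in the middle layers, since $y=\ov x$ forces $w(x,y)=n$, while $\ov y^i=\ov x$ forces $y_j=\ov{x_j}$ for all $j\neq i$ and $y_i=x_i$, whence a direct count gives $w(x,y)=n-1+2x_i\in\{n-1,n+1\}$. (b) Consequently, if $w(x,y)\geq n+2$ then $f'_i(x,y)=1$ for every $i\in[n]$ (fifth case), and if $w(x,y)\leq n-2$ then $f'_i(x,y)=0$ for every $i\in[n]$ (sixth case).

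Next I would prove the key monotonicity statement: if $(a,b)\to(c,d)$ is a transition of $\Gamma(f')$ with $w(a,b)\geq n+2$, then $w(c,d)=w(a,b)+1$. Such a transition flips a single coordinate $k\in[2n]$. If $k=i\in[n]$, then $d=b$ and $c_i=f'_i(a,b)=1\neq a_i$ by (b), so $a_i=0$ and the flipped bit goes $0\to1$, increasing the weight by one. If $k=n+i$, then $c=a$ and $d_i=f'_{n+i}(a,b)=\ov{f'_i(\ov b,\ov a)}$; since $w(\ov b,\ov a)=2n-w(a,b)\leq n-2$, fact (b) gives $f'_i(\ov b,\ov a)=0$, hence $d_i=1\neq b_i$, so $b_i=0$ and again the weight increases by one. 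The mirror statement — if $w(a,b)\leq n-2$ then every transition decreases the weight by one — follows by the same computation with $0$ and $1$ interchanged (alternatively, from the symmetry $(x,y)\mapsto(\ov y,\ov x)$, which one checks is an automorphism of $\Gamma(f')$ fixing $\Omega$ pointwise and sending the weight-$w$ layer to the weight-$(2n-w)$ layer).

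Finally I would conclude: along any path of $\Gamma(f')$ issued from a configuration of weight $\geq n+2$ the weight is strictly increasing, so it stays $\geq n+2$ and the path never reaches $\Omega$; symmetrically for weight $\leq n-2$. Hence a path from $(x,y)$ to $\Omega$ forces $n-1\leq w(x,y)\leq n+1$. I do not anticipate a real obstacle here; the only delicate point is the bookkeeping in steps (a)–(b) — checking that the first case of the definition is confined to the middle layers and that the six cases partition $\B^{2n}$ — after which everything reduces to a one-line argument on the evolution of the weight along a transition.
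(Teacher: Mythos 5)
Your proof is correct and takes essentially the same route as the paper: both reduce the lemma to showing that $f'(x,y)$ is the all-zeroes (resp.\ all-ones) configuration whenever $w(x,y)\leq n-2$ (resp.\ $\geq n+2$), using the identity $f'_{n+i}(x,y)=\ov{f'_i(\ov{y},\ov{x})}$ together with $w(\ov{y},\ov{x})=2n-w(x,y)$ for the second block of coordinates. You merely spell out more explicitly the two steps the paper leaves implicit — that the first case of the definition of $f'_i$ is confined to the layers $w\in\{n-1,n,n+1\}$, and that the weight therefore evolves monotonically away from $n$ along any transition issued outside these layers — which is fine.
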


\begin{proof}
It is sufficient to prove that,
\[
w(x,y)\leq n-2~\Rightarrow f'(x,y)=0
\quad\text{and}\quad
w(x,y)\geq n+2~\Rightarrow f'(x,y)=1.
\]
Let $i\in[n]$. If $w(x,y)\leq n-2$ (resp. $w(x,y)\geq n+2$) then $f'_i(x,y)=0$ (resp. $f'_i(x,y)=1$) by definition. Now, if $w(x,y)\leq n-2$ then $w(\bar y,\bar x)\geq n+2$ thus
\[
f'_{n+i}(x,y)=\ov{f'_i(\bar y,\bar x)}=\ov{1}=0, 
\]
and if $w(x,y)\geq n+2$ then $w(\bar y,\bar x)\leq n-2$ thus
\[
f'_{n+i}(x,y)=\ov{f'_i(\bar y,\bar x)}=\ov{0}=1. 
\]
\qed
\end{proof}

\begin{lemma}\label{lem:transition} 
If $G(f)$ has no negative loops, then, for all $x,y\in\B^n$, the following assertions are equivalent:
\begin{enumerate}
\item[{\em (1)}] $x\to y$ is a transition of $\Gamma(f)$.
\item[{\em (2)}] $(x,\ov{x})\to(y,\ov{x})\to (y,\ov{y})$ is a path of $\Gamma(f')$.
\item[{\em (3)}] $(x,\ov{x})\to(x,\ov{y})\to (y,\ov{y})$ is a path of $\Gamma(f')$.
\item[{\em (4)}] $\Gamma(f')$ has a path from $(x,\ov{x})$ to $(y,\ov{y})$ without internal vertex in $\Omega$. 
\end{enumerate}
Furthermore, the only possible paths of $\Gamma(f')$ from $(x,\ov{x})$ to $(y,\ov{y})$ without internal vertex in $\Omega$ are precisely the ones in {\em (2)} and {\em (3)}. 
\end{lemma}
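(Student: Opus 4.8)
The plan is to prove the cycle of implications $(1)\Rightarrow(2)$, $(1)\Rightarrow(3)$, $(2)\Rightarrow(4)$, $(3)\Rightarrow(4)$, and $(4)\Rightarrow(1)$, and to handle the ``Furthermore'' clause as a byproduct of the analysis needed for $(4)\Rightarrow(1)$. The implications $(2)\Rightarrow(4)$ and $(3)\Rightarrow(4)$ are immediate, since in both cases the internal vertex is $(y,\ov x)$ or $(x,\ov y)$, and neither lies in $\Omega$ unless $x=y$ (which is excluded once we know a transition $x\to y$ exists, as $x\ne y$). So the real content is in $(1)\Rightarrow(2)$, $(1)\Rightarrow(3)$, and $(4)\Rightarrow(1)$.

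For $(1)\Rightarrow(2)$: assume $x\to y$ is a transition of $\Gamma(f)$, so there is a unique $i\in[n]$ with $f_i(x)=y_i\ne x_i$ and $y_j=x_j$ for $j\ne i$; thus $y=\ov x^i$. First I would check the transition $(x,\ov x)\to(y,\ov x)$: these differ only in coordinate $i$ of the first block, and we need $f'_i(x,\ov x)=y_i$. Since $\ov{(\ov x)}=x$, the pair $(x,\ov x)$ falls in the first case of the definition ($y$-argument equals $\ov{x}$-argument), so $f'_i(x,\ov x)=f_i(x)=y_i\ne x_i$, giving the first arc. Then I check $(y,\ov x)\to(y,\ov y)$: these differ only in coordinate $i$ of the second block, i.e. coordinate $n+i$ of $\B^{2n}$, and we need $f'_{n+i}(y,\ov x)=\ov y_i=\ov{(\ov x)^i}_i$; wait, $\ov y=\ov{\ov x^i}=x^i$... more carefully, $\ov y_i = \ov{y_i}=x_i=(\ov x)_i$ is wrong since $y_i=\ov{x_i}$ so $\ov y_i=x_i$, and $(\ov x)_i=\ov{x_i}=y_i$, so indeed $(\ov x)_i\ne(\ov y)_i$ and the second block changes only at $i$. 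By the mirror identity $f'_{n+i}(y,\ov x)=\ov{f'_i(\ov{\ov x},\ov y)}=\ov{f'_i(x,\ov y)}$. Now $(x,\ov y)$: here $\ov{(\ov y)}=y$ and $\ov{(\ov y)^i}=\ov{\ov y^i}$; since $y=\ov x^i$ we get $\ov y^i=\ov{\ov x^i{}^i}$... I would verify that $\ov{(\ov y)}^i=x$, which holds because $y$ and $x$ differ exactly at $i$, so this pair also falls in the first case and $f'_i(x,\ov y)=f_i(x)=y_i$, hence $f'_{n+i}(y,\ov x)=\ov{y_i}=(\ov y)_i$, completing the path. The implication $(1)\Rightarrow(3)$ is entirely symmetric, updating the second block first.

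For $(4)\Rightarrow(1)$ together with the ``Furthermore'' clause, the approach is a careful case analysis of what a path from $(x,\ov x)$ to $(y,\ov y)$ with no internal vertex in $\Omega$ can look like. The starting vertex $(x,\ov x)$ has $w=n$, so it lies in $B$; by Lemma~\ref{lem:01} (or rather the weight argument inside it) every vertex on such a path has weight in $\{n-1,n,n+1\}$, so the path stays in $A\cup B\cup C$. The first step out of $(x,\ov x)$ changes one coordinate; I would enumerate the possible first transitions using the definition of $f'$ at a point of $\Omega$ (where the first case always applies, both for $f'_i$ and, via the mirror identity, for $f'_{n+i}$, since $\ov{(\ov x)}=x$). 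A transition in block one at coordinate $i$ requires $f_i(x)\ne x_i$, i.e. $x\to\ov x^i$ is a transition of $\Gamma(f)$, leading to $(\ov x^i,\ov x)$ with weight either $n+1$ or $n-1$; similarly a block-two transition at $i$ requires (via the mirror) $f_i(x)\ne x_i$ and leads to $(x,\ov x{}^i)$. So after one step we are at some $(\ov x^i,\ov x)$ or $(x,\ov x{}^i)$ with $x\to\ov x^i$ a transition of $\Gamma(f)$. From such a point — which lies in $A\cup C$ but not $\Omega$ — I would show, again from the definition, that the only transition that can eventually reach $\Omega$ again is the one completing the ``square'': from $(\ov x^i,\ov x)$ the only productive move is $(\ov x^i,\ov x)\to(\ov x^i,\ov{\ov x^i})$, landing in $\Omega$. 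The key point making this forced is that any other single-coordinate change either lands outside $A\cup B\cup C$ or leads to a vertex from which, by the $0/1$-forcing outside the first case and the no-negative-loop hypothesis, one cannot return to $\Omega$; here the no-negative-loop hypothesis is exactly what rules out the degenerate self-loop situation analogous to Case~1 of Lemma~\ref{lem:monotone}. Since reaching $\Omega$ means reaching $(y,\ov y)$, the landing vertex $(\ov x^i,\ov{\ov x^i})$ must equal $(y,\ov y)$, so $y=\ov x^i$ and $x\to y$ is a transition of $\Gamma(f)$, proving $(1)$; and the only two paths were $(x,\ov x)\to(\ov x^i,\ov x)\to(\ov x^i,\ov{\ov x^i})$ and its mirror, which are exactly (2) and (3).

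The main obstacle I anticipate is the $(4)\Rightarrow(1)$ direction and especially pinning down the ``Furthermore'' clause: one must show that from the intermediate vertices $(\ov x^i,\ov x)$ or $(x,\ov x{}^i)$ there is genuinely \emph{no} alternative way to crawl back to $\Omega$ within $A\cup B\cup C$ — e.g. by changing a second distinct coordinate and then recovering. This requires using the precise second, third and fourth clauses of the definition of $f'_i$ (the $\ov{x_i}$ on weight $n$, and the forced $1$/$0$ on weights $n+1$/$n-1$ away from the near-mirror condition), checking that the near-mirror condition $\ov y^i=\ov x$ can only be met by the coordinate already flipped, and invoking the absence of negative loops to kill the one remaining bad configuration, just as in Cases~1--2 of Lemma~\ref{lem:monotone}. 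Organizing this enumeration cleanly, rather than drowning in subcases, will be the delicate part.
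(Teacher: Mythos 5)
Your proposal is correct and follows essentially the same route as the paper: direct verification of $(1)\Rightarrow(2)$ and $(1)\Rightarrow(3)$ from the first clause of the definition of $f'$ and the mirror identity, and, for $(4)\Rightarrow(1)$ together with the ``Furthermore'' clause, the same forced two-step analysis --- weight confinement via Lemma~\ref{lem:01}, the $0/1$-forcing clauses of $f'$ to show the second flip must occur at coordinate $n+i$ (resp.\ $i$), and the no-negative-loop hypothesis to exclude flipping coordinate $i$ back. The only organizational difference is that you obtain $(2)\Rightarrow(1)$ via the cycle $(2)\Rightarrow(4)\Rightarrow(1)$ instead of directly, which is immaterial.
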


\begin{proof}
Suppose that $\Gamma(f)$ has a transition $x\to y$, and let $i\in[n]$ be such that $y=\ov{x}^i$. We have $f'_i(x,\ov{x})=f_i(x)\neq x_i$ thus $\Gamma(f')$ has a transition from $(x,\ov{x})$ to $(\ov{x}^i,\ov{x})=(y,\ov{x})$. Since 
\[
f'_{n+i}(\ov{x}^i,\ov{x})=\ov{f_i(\ov{\ov{x}},\ov{\ov{x}^i})}=\ov{f_i(x,\ov{\ov{x}^i})}=\ov{f_i(x)}=x_i\neq (\ov{x}^i,\ov{x})_{n+i},  
\]
$\Gamma(f')$ has a transition from $(\ov{x}^i,\ov{x})$ to 
\[
\ov{(\ov{x}^i,\ov{x})}^{n+i}=(\ov{x}^i,\ov{\ov{x}}^i)=(y,\ov{y}).
\]
This proves the implication $(1)\Rightarrow (2)$. Now, if $\Gamma(f')$ contains the transition $(x,\ov{x})\to (y,\ov{x})$ then there exists $i\in [n]$ such that $y=\ov{x}^i$ and $y_i=f'_i(x,\ov{x})=f_i(x)$. Thus $x\to y$ is a transition of $\Gamma(f)$. So we have $(1)\iff (2)$ and we prove similarly that $(1)\iff (3)$. 

\smallskip
Since $[(2)\text{ or }(3)]\Rightarrow (4)$ is obvious, to complete the proof it is sufficient to prove that if $\Gamma(f')$ has a path $P$ from $(x,\ov{x})$ to $(y,\ov{y})$ without internal vertex in $\Omega$ then either $P=(x,\ov{x})\to(y,\ov{x})\to (y,\ov{y})$ or $P=(x,\ov{x})\to(x,\ov{y})\to (y,\ov{y})$. Let $a$ be the configuration following $(x,\ov{x})$ in $P$, and let $b$ be the configuration following $a$ in $P$. We will prove that $b=(y,\ov{y})$ and $a=(x,\ov{y})$ or $a=(y,\ov{x})$. We have $w(a)=n\pm 1$ and thus $w(b)\in\{n-2,n,n+2\}$, but if $w(b)=n\pm 2$ then we deduce from Lemma~\ref{lem:01} that $\Gamma(f')$ has no paths from $b$ to a configuration in $\Omega$, a contradiction. Thus $w(b)=n$. Let $i\in [n]$ be such that $a=(\ov{x}^i,\ov{x})$ or $a=(x,\ov{\ov{x}}^i)$. We have four cases. 
\begin{description}
\item[{\it Case 1: $a=(\ov{x}^i,\ov{x})$ and $w(a)=n-1$}.]
Since $w(a)=n-1$ we have $x_i=1$, and thus $f'_i(x,\ov{x})=f_i(x)=0$. Also $f'_i(a)=f'_i(\ov{x}^i,\ov{x})=f_i(\ov{x}^i)=0$ since otherwise $G(f)$ has a negative loop on vertex $i$. Let $1\leq j\leq 2n$ be such that $b=\ov{a}^j$. Since $w(a)<w(b)=n$, we have $a_j=0$ and  $f'_j(a)=1$. If $1\leq j\leq n$ then $j\neq i$ (since $f'_i(a)=0$) so $\ov{\ov{x}}^j\neq\ov{\ov{x}^i}$ and since $w(\ov{x}^i,\ov{x})=n-1$, we deduce from the definition of $f'$ that $f'_j(a)=f'_j(\ov{x}^i,\ov{x})=0$, a contradiction. So $n< j\leq 2n$. Let $k=j-n$. We have 
\[
f'_j(a)=f'_{n+k}(a)=f'_{n+k}(\ov{x}^i,\ov{x})=\overline{f'_k(\ov{\ov{x}},\ov{\ov{x}^i})}=\overline{f'_k(x,\ov{\ov{x}^i})}.
\]
Since $w(\ov{x}^i,\ov{x})=n-1$ we have $w(x,\ov{\ov{x}}^i)=n+1$. So if $k\neq i$ we have $\ov{\ov{\ov{x}}^i}^k\neq\ov{x}$. Thus by the definition of $f'$ we have $f'_k(x,\ov{\ov{x}^i})=1$ thus $f'_j(a)=0$, a contradiction. We deduce that $k=i$, that is, $j=n+i$. Thus $b=\ov{a}^{n+i}=(\ov{x}^i,\ov{\ov{x}}^i)\in\Omega$, and we deduce that 
\[
P=(x,\ov{x})\to (y,\ov{x})\to (y,\ov{y}).
\] 
\item[{\it Case 2: $a=(\ov{x}^i,\ov{x})$ and $w(a)=n+1$}.]
We prove with similar arguments that
\[
P=(x,\ov{x})\to (y,\ov{x})\to (y,\ov{y}).
\]
\item[{\it Case 3: $a=(x,\ov{\ov{x}}^i)$ and $w(a)=n-1$}.]
We prove with similar arguments that
\[
P=(x,\ov{x})\to (x,\ov{y})\to (y,\ov{y}).
\]
\item[{\it Case 4: $a=(x,\ov{\ov{x}}^i)$ and $w(a)=n+1$}.]
We prove with similar arguments that
\[
P=(x,\ov{x})\to (x,\ov{y})\to (y,\ov{y}).
\]
\end{description}
\qed
\end{proof}

\pagebreak

\begin{lemma}\label{lem:path} 
If $G(f)$ has no negative loops, then for all $x,y\in\B^n$, the following two assertions are equivalent:
\begin{enumerate}
\item[{\em (1)}]
$\Gamma(f)$ has a path from $x$ to $y$ of length $\ell$. 
\item[{\em (2)}]
$\Gamma(f')$ has a path from $(x,\bar x)$ to $(y,\bar y)$ of length $2\ell$. 
\end{enumerate}
\end{lemma}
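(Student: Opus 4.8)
The plan is to derive Lemma~\ref{lem:path} from Lemma~\ref{lem:transition} by cutting any path of $\Gamma(f')$ along the vertices it has in $\Omega$. The case $\ell=0$ is trivial (both assertions then merely say $x=y$), so assume $\ell\geq 1$ from now on.

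For $(1)\Rightarrow(2)$, I would start from a path $x=x^0\to x^1\to\cdots\to x^\ell=y$ of $\Gamma(f)$. Each $x^k\to x^{k+1}$ is a transition, so the implication $(1)\Rightarrow(2)$ of Lemma~\ref{lem:transition} yields the length-two path $(x^k,\ov{x^k})\to(x^{k+1},\ov{x^k})\to(x^{k+1},\ov{x^{k+1}})$ in $\Gamma(f')$. Since the last vertex of the $k$-th such path is the first vertex of the $(k{+}1)$-th, concatenating them gives a path of $\Gamma(f')$ from $(x,\ov{x})$ to $(y,\ov{y})$ of length $2\ell$.

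For $(2)\Rightarrow(1)$, let $P=v_0v_1\cdots v_{2\ell}$ be a path of $\Gamma(f')$ with $v_0=(x,\ov{x})$ and $v_{2\ell}=(y,\ov{y})$. Both endpoints lie in $\Omega$, so the set of indices $i$ with $v_i\in\Omega$ can be written $0=i_0<i_1<\cdots<i_m=2\ell$, and we may put $v_{i_j}=(p_j,\ov{p_j})$, with $p_0=x$ and $p_m=y$. For each $j<m$ the subpath $v_{i_j}v_{i_j+1}\cdots v_{i_{j+1}}$ has length at least one and, by the choice of the $i_j$, no internal vertex in $\Omega$; thus it is a path of $\Gamma(f')$ from $(p_j,\ov{p_j})$ to $(p_{j+1},\ov{p_{j+1}})$ witnessing assertion (4) of Lemma~\ref{lem:transition} for the pair $(p_j,p_{j+1})$. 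The equivalence $(4)\Leftrightarrow(1)$ then gives that $p_j\to p_{j+1}$ is a transition of $\Gamma(f)$, and the ``Furthermore'' clause forces this subpath to be one of the two explicit length-two paths, i.e.\ $i_{j+1}=i_j+2$. Summing over $j$ yields $2\ell=i_m-i_0=2m$, so $m=\ell$ and $x=p_0\to p_1\to\cdots\to p_\ell=y$ is a path of $\Gamma(f)$ of length $\ell$.

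I do not anticipate a genuine obstacle: all the real content sits in Lemma~\ref{lem:transition}, which is also where the hypothesis on negative loops of $G(f)$ is consumed. The two points deserving a word of care are that both endpoints of $P$ truly lie in $\Omega$, so that the index list is nonempty and runs from $0$ to $2\ell$, and that consecutive $\Omega$-vertices of $P$ are necessarily distinct and non-adjacent in $\Gamma(f')$ (two distinct elements $(p,\ov{p})$, $(q,\ov{q})$ of $\Omega$ differ in the two coordinates $i$ and $n+i$ whenever $p,q$ differ in coordinate $i$), which is what ensures each of the $m$ subpaths above genuinely has length two; the rest is routine bookkeeping.
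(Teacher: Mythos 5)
Your proposal is correct and follows essentially the same route as the paper: the forward direction concatenates the length-two paths supplied by Lemma~\ref{lem:transition}, and the converse cuts the path of $\Gamma(f')$ at its $\Omega$-vertices and invokes the ``furthermore'' clause of Lemma~\ref{lem:transition} to see that each segment has length exactly two. Your extra remarks (the $\ell=0$ case, distinctness of consecutive $\Omega$-vertices) are harmless additional care on points the paper leaves implicit.
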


\begin{proof}
According to Lemma~\ref{lem:transition}, $x^0\to x^1\to x^2\to \cdots\to x^\ell$ is a path of $\Gamma(f)$ if and only if
\[
(x^0,\ov{x^0})\to(x^0,\ov{x^1})\to  (x^1,\ov{x^1})\to(x^1,\ov{x^2})\to  (x^2,\ov{x^2})  \cdots\to (x^\ell,\ov{x^\ell})
\]
is a path of $\Gamma(f')$. This proves $(1)\Rightarrow (2)$. To prove $(2)\Rightarrow(1)$ suppose that $\Gamma(f')$ has a path $P$ from $(x,\bar x)$ to $(y,\bar y)$ of length $2\ell$. Let $(a^0,\ov{a^0}),(a^1\ov{a^1}),\dots,(a^p,\ov{a^p})$ be the configurations of $P$ that belongs to $\Omega$, given in the order (so $a^0=x$ and $a^p=y$). According to Lemma~\ref{lem:transition}, there exists $b^1,b^2,\dots,b^p$ with $b^q\in\{(a^{q-1},\ov{a^q}),(a^q,\ov{a^{q-1}})\}$ for all $1\leq q\leq p$ such that 
\[
P=(a^0,\ov{a^0})\to b^1 \to (a^1,\ov{a^1})\to b^2\to\cdots \to b^p\to (a^p,\ov{a^p}).
\]
Thus $p=\ell$, and again by Lemma~\ref{lem:transition}, $x^0\to x^1\to\cdots \to x^\ell$ is a path of $\Gamma(f)$.\qed 
\end{proof}

Theorem~\ref{thm:embedding} result from Lemmas~\ref{lem:monotone}, \ref{lem:fixed} and \ref{lem:path}. 

\section{Conclusion and open questions}\label{sec:conclusion}

In this paper we have proved that the asynchronous graph of every $n$-component Boolean network without negative loop can be embedded in the asynchronous graph of a $2n$-component {\em monotone} Boolean network,  in such a way that fixed points and distances between configurations are preserved. A consequence of this result, which was our initial goal, is that the asynchronous graph of a monotone network may have an exponential diameter. More precisely, it may exist a configuration $x$ and a fixed point $y$ reachable from $x$ such that the distance between $x$ and $y$ is at least $2^{\frac{n}{2}}$. This contrasts with the fact that for every configuration $x$ there exists a fixed point $y$ such that the distance between $x$ and $y$ is at most~$n$.  

\smallskip
These results raise several questions. Could it be possible to embed, in a similar way, a $n$-component network {\em with} negative loops into a $m$-component monotone network? Maybe this would require $m$ to be even larger than $2n$. Besides, the embedding we propose is based on the injection $x\mapsto (x,\overline{x})$ from $\B^n$ to the balanced words of length $2n$. The well-known Knuth's balanced coding scheme \cite{K86} provides a rather simple injection from $\B^n$ to the balanced words of length $n+2\log_2n$ only. Could this technique be used to decrease the number of components in the host monotone network from $2n$ to $n+2\log_2n$? Finally, it could be interesting to study the interaction graph of monotone networks with large diameter. Does it necessarily contain long cycles, or many disjoint cycles?

\paragraph{Acknowledgment} This work has been partially supported by the project PACA APEX FRI. We wish also to thank Pierre-Etienne Meunier, Maximilien Gadouleau and an anonymous reviewer for stimulating discussions and interesting remarks. 
  
\bibliographystyle{splncs_srt}
\bibliography{bib}

\end{document}